\newtheorem{theorem}{Theorem}
\newtheorem{lemma}{Lemma}[]
\newcommand{\T}{\mathrm{T}}
\newcommand{\F}{\mathrm{F}}
\newcommand{\Z}{\mathrm{true}}
\title{QWENDY: Gene Regulatory Network Inference by Quadruple Covariance Matrices}
\author[1,$\ast$]{Yue Wang}
\author[2]{Xueying Tian}
\affil[1]{Irving Institute for Cancer Dynamics and Department of Statistics, Columbia University, New York}
\affil[2]{School of Information, University of California, Berkeley}
\affil[$\ast$]{Corresponding author, yuewang@ihes.fr, ORCID: 0000-0001-5918-7525}
\date{}
\begin{document}

\maketitle
\begin{abstract}
Knowing gene regulatory networks (GRNs) is important for understanding various biological mechanisms. In this paper, we present a method, QWENDY, that uses single-cell gene expression data measured at four time points to infer GRNs. Based on a linear gene expression model, it solves the transformation of the covariance matrices. Unlike its predecessor WENDY, QWENDY avoids solving a non-convex optimization problem and produces a unique solution. We test the performance of QWENDY on three experimental data sets and two synthetic data sets. Compared to previously tested methods on the same data sets, QWENDY ranks the first on experimental data, although it does not perform well on synthetic data.
\end{abstract}

\section{Introduction}
One gene can activate or inhibit the expression of another gene. Genes and such regulation relations form a gene regulatory network (GRN). For $n$ genes, the corresponding GRN is commonly expressed as an $n\times n$ matrix $A$, where $A_{ij}>0/=0/<0$ means gene $i$ has positive/no/negative regulation effect on gene $j$, and its absolute value represents the regulation strength. If the regulation relations of concerned genes are known, one can understand the corresponding biological process or even control it with gene perturbation. Therefore, knowledge of GRNs can be useful in developmental biology \cite{cheng2024reconstruction,cheng2022ex,sha2024reconstructing}, and even in the study of macroscopic behavior \cite{li2021chronic,vijayan2022internal,axelrod2023drosophila}. Since it is difficult to determine the GRN directly, the common practice is to infer the GRN from gene expression data. Given new GRN inference methods, we can study how cells maintain homeostasis \cite{wang2022chronic,wang2020identification} or be driven away from homeostasis and cause diseases \cite{mcdonald2023computational,cheng2023mathematical}.

With new measurement techniques, such as scRNA-seq, one can measure the expression levels (mRNA counts) of different genes for a single cell. Since gene expression at the single cell level is random, this measurement can be repeated for different cells, and we can use those samples to obtain the probability distribution of different genes at this time point. Since the measurement kills cells, one cell can only be measured once. Therefore, we cannot obtain the joint probability distribution of gene expression levels at different time points. When we measure at multiple time points, we can only obtain a marginal probability distribution of different genes for each time point.

If the gene expression is at stationary, measurement at multiple time points will produce the same distribution. If the gene expression is away from stationary, such as after adding drugs or during development, we can measure the gene expression at multiple time points and obtain several different probability distributions. For single-cell expression data of $n$ genes at one time point, we can calculate the mean value of each gene ($n$ independent values) and the $n\times n$ covariance matrix ($n(n+1)/2$ independent values, since the covariance matrix is symmetric), while higher-order statistics are not numerically stable due to limited cell number. Therefore, we have $n+n(n+1)/2$ independent known values, much smaller than what is needed to fully determine the $n\times n$ GRN \cite{wang2022inference}. Therefore, we prefer the non-stationary distributions measured at multiple time points, since they contain enough information to infer the GRN. 

For such single-cell gene expression data measured at multiple time points, although it is the most informative data type under current technology, there are only a few GRN methods developed specifically for this data type \cite{papili2018sincerities,yeo2021generative}. We have developed the WENDY method to infer the GRN with this data type \cite{wang2024gene}. It needs measurement at two time points, and solves the transformation between the covariance matrices of two probability distributions, where the transformation is determined by the GRN. However, it needs to solve a non-convex optimization problem, which has infinitely many solutions, and WENDY will output one solution, determined by the numerical solver chosen in the realization. 

In this paper, we present an improved version of WENDY that needs data measured at four time points, and it is also based on solving the transformation between covariance matrices. This new method is named QWENDY, where Q stands for quadruple. With the help of data from more time points, QWENDY can uniquely determine the GRN. We will prove that under some assumptions, the output of QWENDY is the ground truth GRN. Besides, QWENDY does not need to conduct non-convex optimizations, but just matrix decompositions. Therefore, the output of QWENDY does not depend on the realization of the numerical computing procedure.

In a previous paper \cite{tian2025trendy}, we have tested 16 GRN inference methods on two experimental data sets and two synthetic data sets. In this paper, we test the performance of QWENDY on the same data sets and an extra experimental data set, and compare QWENDY with previous methods. In all 17 GRN inference methods, QWENDY ranks the first on experimental data, although it does not perform well on synthetic data. These results suggest that QWENDY is a promising GRN inference method that warrants further testing and refinement.

Section~\ref{sec2} reviews other GRN inference methods. Section~\ref{sec3} introduces the QWENDY method, including a proof of its correctness. Section~\ref{sec4} tests the performance of QWENDY on four data sets. After conclusions in Section~\ref{sec5}, we finish with some discussions in Section~\ref{sec6}.

\section{Literature review}
\label{sec2}

There have been numerous non-deep learning GRN inference methods, and they can be roughly classified as information-based and model-based. 

Information-based methods \cite{huynh2010inferring,huynh2018dyngenie3,papili2018sincerities} do not rely on models for gene expression, but treat GRN inference as a feature selection problem: for a target gene, select genes that can be used to predict the level of the target gene. A common obstacle for information-based methods is to distinguish between direct and indirect regulations.

Model-based methods construct concrete models for gene expression under regulation, and fit the expression data to the models to determine model parameters and reconstruct the GRN. Some methods \cite{huynh2015combining} require measuring the same cell multiple times, which is not quite applicable for now. Some methods \cite{perrin2003gene,ma2020inference} use average expression levels measured over many cells (bulk level), which do not utilize the rich information in the single-cell level measurements. Some methods \cite{lee2019scaling,burdziak2023sckinetics} only work on single-cell data at one time point. For single-cell gene expression data measured at multiple time points, where each cell is measured only once, we only know one model-based method, WENDY. Therefore, we develop QWENDY as an alternative.

For more detailed summaries of traditional GRN inference methods, readers may refer to the literature review section of the WENDY paper \cite{wang2024gene}. 

Deep learning-based GRN inference methods \cite{nauta2019causal,kentzoglanakis2011swarm,shu2021modeling,yeo2021generative,feng2023gene,mao2023predicting} generally use neural networks as black boxes, without integrating them with gene expression models. Therefore, lacking of interpretability is a common problem.

We have developed the TRENDY method \cite{tian2025trendy} to enhance the WENDY method with deep learning tools. Similar to TRENDY, there are some approaches to enhance GRNs inferred by other known methods \cite{feizi2013network,cao2013going,pirayre2015brane,pirayre2017brane,wang2018network}, but they generally require extra knowledge of transcription factors or cannot fully determine the GRN. 

For more detailed summaries of deep learning-based GRN inference methods and approaches to enhance existing methods, readers may refer to the literature review section of the TRENDY paper \cite{tian2025trendy}. 

Recently, there are some GRN inference methods related to large language models (LLMs), although it is questionable whether LLMs trained with natural language data are useful to the study of gene regulation.

Some researchers directly used LLMs as an oracle machine to generate answers, without further training the model. Azam et al. \cite{azam2024comprehensive} asked different GRNs whether one gene regulates another gene. Afonja et al. \cite{afonja2024llm4grn} provided GPT-4 with potential transcription factors and asked it to generate a GRN. Wu et al. \cite{wuregulogpt} provided GPT-4 with related papers and asked it to summarize regulations and form a GRN.

Some researchers worked on pre-trained LLMs and fine-tuned them with new data. Weng et al. \cite{weng2025integrating} trained GPT-3.5 with related papers to obtain GenePT, a new LLM that can provide a high-dimensional embedding (a vector of real numbers) of each gene. Then this embedding was used to train a neural network to output the GRN. Yang et al. \cite{yang2022scbert} trained BERT with scRNA-seq data to obtain scBERT, which can also provide gene embedding. Kommu et al. \cite{kommu2024gene} used the embedding from scBERT to infer the GRN.

Some researchers trained new models from scratch. Cui et al. \cite{cui2024scgpt} trained a new model, scGPT, with expression data from many cells. The structure of scGPT is similar to other LLMs, although with a smaller size. This model can provide gene embedding, which was used to infer GRN. 

For general applications of LLMs in bioinformatics, readers may refer to two reviews \cite{wang2024bioinformatics,ruan2025large}.

{Gene expression and regulation can have spatial patterns. There are some studies on GRN inference from spatial transcriptomics data \cite{peng2024stvcr,li2025spagrn,wang2025joint,zhang2025integrating,zhou2024spatial,dong2022integrating}. We shall assume spatial homogeneity in this paper.}

\section{Methods}
\label{sec3}
\subsection{Setup}
At four time points $T=0$, $T=t$, $T=2t$, $T=3t$, we measure the single-cell gene expression levels. For the expression levels of $n$ genes from $m$ cells at each time point, we treat them as $m$ samples of an $n$-dimensional probability distribution. Then we first calculate the average level of each gene over $m$ cells, and use graphical lasso to calculate the $n\times n$ covariance matrix for different genes. The $1\times n$ expected levels at four time points are denoted as $\mathbf{x}_0$, $\mathbf{x}_1$, $\mathbf{x}_2$, $\mathbf{x}_3$. The covariance matrices at four time points are denoted as $K_0$, $K_1$, $K_2$, $K_3$. Notice that the process does not start from stationary, so that the probability distributions of $n$ genes are different for different time points, and these $\mathbf{x}_i$ and $K_i$ are not equal.

In the WENDY paper, the relationship for $K_0,K_1$ and the GRN $A$ is derived after some approximations:
\[K_1=(I+tA^\T)K_0(I+tA),\]
where $I$ is the $n\times n$ identity matrix. Define $B=I+tA$, we have
\begin{equation}
    K_1=B^\T K_0 B.
    \label{eqk0}
\end{equation}
From this equation, WENDY directly solves $B$ (and thus $A$) by minimizing $||K_1-B^\T K_0 B||_\F^2$, where $\F$ is the Frobenius norm. This problem is non-convex and has infinitely many solutions. WENDY outputs one solution of them.

For $K_2,K_3$, similarly, we also have approximated equations 
\begin{equation}
    K_2=B^\T K_1 B,
    \label{eqk1}
\end{equation}
\begin{equation}
    K_3=B^\T K_2 B.
    \label{eqk2}
\end{equation}
In this paper, we study whether we can better solve $B$ with data from more time points, especially $K_2,K_3$. Since $B^\T K B=(-B)^\T K (-B)$, we cannot distinguish between $B$ and $-B$ from $K_0$, $K_1$, $K_2$, $K_3$. 

To solve this problem, we use $\mathbf{x}_0$, $\mathbf{x}_1$, $\mathbf{x}_2$, $\mathbf{x}_3$. In the WENDY paper, it is derived that 
\begin{equation}
    \mathbf{x}_1=\mathbf{x}_0 B + t \mathbf{c},
    \label{eqx0}
\end{equation}
where $\mathbf{c}$ is an unknown vector. Similarly,
\begin{equation}
    \mathbf{x}_2=\mathbf{x}_1 B + t \mathbf{c},
    \label{eqx1}
\end{equation}
\begin{equation}
    \mathbf{x}_3=\mathbf{x}_2 B + t \mathbf{c}.
    \label{eqx2}
\end{equation}

Since Eqs.~\ref{eqk0}--\ref{eqx2} are approximated, there are two problems: \textbf{(1)} if these equations hold accurately, can we solve $B$; \textbf{(2)} if these equations do not quite hold, can we find $B$ to minimize the error. We will present the QWENDY method that provides positive answers to both problems.

For the first problem, Theorem~\ref{thm1} proves that given $K_0$, $K_1$, $K_2$, $K_3$, $\mathbf{x}_0$, $\mathbf{x}_1$, $\mathbf{x}_2$, $\mathbf{x}_3$ that satisfy Eqs.~\ref{eqk0}--\ref{eqx2} for some $B_0$, QWENDY can solve $B_0$ uniquely.

For the second problem, there are different interpretations. 

One interpretation is to minimize
\[||K_1-B^T  K_0  B||_F^2+||K_2-B^T  K_1  B||_F^2+||K_3-B^T  K_2  B||_F^2\]
for any $B$. Unfortunately, this optimization problem is non-convex, making it difficult to solve. 

Notice that for the gene expression data after interpretation, $K_0$ is measured earlier than $K_3$, making it farther from stationary, and more informative. Therefore, we want to emphasize more on $K_0$, $K_1$ than $K_2$, $K_3$. Our goal is to first find $B$ that minimizes $K_1 - B^\T K_0 B$; for such $B$ (not unique), we further determine which minimizes $K_2 - B^\T K_1 B$; for such $B$ (still not unique), we finally determine which minimizes $K_3 - B^\T K_2 B$. This time we can solve $B$ uniquely up to a $\pm$ sign, which can be determined by $\mathbf{x}_0$, $\mathbf{x}_1$, $\mathbf{x}_2$, $\mathbf{x}_3$. 

We will derive QWENDY by solving the second problem, and then prove that it also solves the first problem.

\subsection{Algorithm details}

In this section, given general covariance matrices $K_0$, $K_1$, $K_2$, $K_3$ that may not satisfy Eqs.~\ref{eqk0}--\ref{eqk2} for any $B_0$, we introduce a procedure to calculate $B$ that approximately solves Eqs.~\ref{eqk0}--\ref{eqk2}. Under some mild conditions (some matrices are invertible and have distinct eigenvalues), $B$ can be uniquely determined up to a $\pm$ sign. Then we use $\mathbf{x}_0$, $\mathbf{x}_1$, $\mathbf{x}_2$, $\mathbf{x}_3$ to distinguish between $B$ and $-B$. The whole procedure is named as the QWENDY method.

\noindent \textbf{Step} (1): For any $B$, we want to minimize
\begin{equation}
    ||K_1 - B^\T K_0 B||_\F^2.
    \label{target1}
\end{equation}
Assume $K_0$, $K_1$, $K_2$, $K_3$ are invertible. Consider Cholesky decompositions
\[K_1=L_1 L_1^\T, \ K_0=L_0 L_0^\T,\] 
where $L_1$ and $L_0$ are lower-triangular and invertible. Define 
\[O=L_1^{-1}B^\T L_0,\] 
then the target Eq.~\ref{target1} becomes 
\[||L_1 L_1^\T-L_1 O O^\T L_1^\T||_\F^2.\]
Therefore, Eq.~\ref{target1} is minimized to $0$ if and only if $O$ is orthonormal: $OO^\T = I$.

Since 
\[B=L_0^{-\T}O^\T L_1^\T,\] 
we use $K_0$ and $K_1$ to restrict B to a space with the same dimension as the set of all orthonormal matrices.

\noindent \textbf{Step} (2): For such $B$ that minimizes Eq.~\ref{target1}, we want to find $B$ that makes $B^\T K_1 B$ close to $K_2$. Here we do not minimize 
\[||K_2 - B^\T K_1 B||_\F^2\]
as it is difficult. Instead, we want to minimize
\begin{equation}
    ||L_1^{-1}(K_2-B^\T K_1 B)L_1^{-\T} ||_\F^2
    \label{target2}
\end{equation}
among $B$ that minimizes Eq.~\ref{target1}.

Assume that $L_0^{-1} K_1 L_0^{-\T}$ does not have repeated eigenvalues. Consider the eigenvalue decomposition decomposition 
\begin{equation*}
    L_0^{-1} K_1 L_0^{-\T} = P_1 D_1 P_1^\T,
\end{equation*}
where $P_1$ is orthonormal, and $D_1$ is diagonal with strictly increasing positive diagonal elements (eigenvalues), since $L_0^{-1} K_1 L_0^{-\T}$ is positive definite and symmetric. Similarly, assume that $L_1^{-1}K_2L_1^{-\T}$ does not have repeated eigenvalues, and we have 
\[L_1^{-1}K_2L_1^{-\T} = P_2 D_2 P_2^\T\]
with orthonormal $P_2$ and diagonal $D_2$ with strictly increasing positive diagonal elements (eigenvalues). Now the target Eq.~\ref{target2} equals 
\[||P_2 D_2 P_2^\T - O P_1 D_1 P_1^\T O^\T||^2_\F=||D_2 - P_2^{\T} O P_1 D_1 P_1^\T O^\T P_2||^2_\F,\]
since $P_2$ is orthonormal and does not affect Frobenius norm. Define 
\[W=P_2^\T O P_1,\]
which is orthonormal. Then 
\[O=P_2 W P_1^\T,\]
and Eq.~\ref{target2} equals 
\begin{equation}
    ||D_2 - W D_1 W^\T||^2_\F.
    \label{wd}
\end{equation}

We use the following lemma to handle Eq.~\ref{wd}:
\begin{lemma}
For diagonal $D_1,D_2$ with strictly increasing diagonal elements and any orthonormal $W$, Eq.~\ref{wd} is minimized when $W$ is diagonal, and the diagonal elements are $\pm 1$. (There are $2^n$ possibilities for such $W$.)
\label{lemma1}
\end{lemma}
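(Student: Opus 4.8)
The plan is to minimize the objective
\[
f(W) = \|D_2 - W D_1 W^\T\|_\F^2
\]
over the orthogonal group, and to show that every minimizer must be a signed permutation that, because of the strictly-increasing ordering of the diagonal entries, reduces to a $\pm 1$ diagonal matrix. First I would expand the Frobenius norm as
\[
f(W) = \|D_2\|_\F^2 + \|W D_1 W^\T\|_\F^2 - 2\,\mathrm{tr}(D_2\, W D_1 W^\T).
\]
The middle term is orthogonally invariant: $\|W D_1 W^\T\|_\F^2 = \mathrm{tr}(D_1^2) = \|D_1\|_\F^2$, so it is constant in $W$. Hence minimizing $f$ is equivalent to maximizing the linear functional $g(W) = \mathrm{tr}(D_2\, W D_1 W^\T)$ over orthonormal $W$.

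Next I would reparametrize by setting $S = W \circ W$ entrywise, i.e.\ $S_{ij} = W_{ij}^2$. Writing $D_1 = \mathrm{diag}(d_1,\dots,d_n)$ and $D_2 = \mathrm{diag}(e_1,\dots,e_n)$ with both sequences strictly increasing, a direct computation gives
\[
g(W) = \sum_{i,j} e_i\, d_j\, W_{ij}^2 = \sum_{i,j} e_i\, d_j\, S_{ij}.
\]
Because $W$ is orthonormal, $S$ is doubly stochastic: each row and each column of $S$ sums to $1$, and all entries are nonnegative. Therefore the maximum of $g$ over orthonormal $W$ is bounded by the maximum of the linear functional $\sum_{i,j} e_i d_j S_{ij}$ over the Birkhoff polytope of doubly stochastic matrices. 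By the Birkhoff--von Neumann theorem the extreme points of this polytope are permutation matrices, so the maximum is attained at some permutation $\sigma$, giving value $\sum_i e_i d_{\sigma(i)}$. A rearrangement-inequality argument then shows that, since both $(e_i)$ and $(d_j)$ are strictly increasing, the unique maximizing permutation is the identity, and strict monotonicity makes this optimum strict: any other permutation gives a strictly smaller value.

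Finally I would translate the optimal $S$ back into a description of $W$. The identity permutation as the optimal vertex of the Birkhoff polytope forces $S = I$, meaning $W_{ij}^2 = \delta_{ij}$, so $W$ has exactly one nonzero entry of modulus $1$ in each row and column, located on the diagonal; orthonormality then forces each such entry to be $\pm 1$. This gives the $2^n$ claimed minimizers, each of which attains $f(W) = \|D_2 - D_1\|_\F^2$. The main obstacle I anticipate is rigorously justifying that the continuous maximum over the (non-convex) orthogonal group is actually attained at a vertex of the Birkhoff polytope, rather than merely bounded by it: the map $W \mapsto S$ sends $O(n)$ into the doubly stochastic matrices but is not surjective, so I must verify that the bound is tight, which here follows because the identity permutation matrix is itself orthonormal and hence a genuine admissible $W$. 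The strictness of the ordering is what upgrades ``some signed permutation'' to ``diagonal with $\pm 1$ entries,'' and I would be careful to invoke the strict rearrangement inequality so that off-diagonal signed-permutation solutions are excluded.
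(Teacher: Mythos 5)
Your proof is correct and follows essentially the same route as the paper: expand the Frobenius norm to reduce the problem to maximizing a linear functional of the doubly stochastic matrix $S = W\circ W$, then combine the Birkhoff--von Neumann theorem with the strict rearrangement inequality to force $W\circ W = I$. The only cosmetic difference is that the paper writes $W\circ W$ explicitly as a convex combination of permutation matrices and applies the rearrangement inequality term by term, whereas you invoke the equivalent vertex characterization of the maximum over the Birkhoff polytope; your extra care about tightness of the relaxation (the identity being an admissible orthonormal $W$) is a nice touch but the substance is identical.
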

\begin{proof}
    Denote the diagonal elements of $D_1$ as $\mathbf{d}_1=[d_{1,1},\ldots,d_{1,n}]$ with $d_{1,1}<d_{1,2}<\cdots,<d_{1,n-1}<d_{1,n}$, and similarly $\mathbf{d}_2=[d_{2,1},\ldots,d_{2,n}]$ with $d_{2,1}<d_{2,2}<\cdots,<d_{2,n-1}<d_{2,n}$ for $D_2$. We need to minimize the following norm with an orthonormal $W$.
\begin{equation*}
    \begin{split}
        &||D_2-WD_1W^\T||_\F^2=||D_2||_\F^2+||WD_1W^\T||_\F^2-2\sum_{i=1}^n [D_2\otimes(WD_1W^\T)]_{ii}\\
        =&||D_2||_\F^2+||D_1||_\F^2-2\sum_{i=1}^n d_{2,i}\sum_{j=1}^n W_{ij}^2d_{1,j} \\
        =&||D_2||_\F^2+||D_1||_\F^2-2\mathbf{d}_1(W\otimes W)\mathbf{d}_2^\T,
    \end{split}
\end{equation*}
where $\otimes$ is the element-wise product. Thus we just need to maximize $\mathbf{d}_1(W\otimes W)\mathbf{d}_2^\T$. Notice that $W\otimes W$ is doubly-stochastic, meaning that it is non-negative, and each row or column has sum $1$. By Birkhoff–von Neumann Theorem \cite{jurkat1967term}, $W\otimes W$ can be decomposed to 
\[W\otimes W=\sum_{i=1}^k c_i Q_i,\] 
where $Q_i$ is a permutation matrix, $c_i>0$, and $\sum_{i=1}^k c_i=1$. 

Due to the rearrangement inequality \cite{hardy1934inequalities}, for each permutation matrix $Q_i$,
\[\mathbf{d}_1Q_i\mathbf{d}_2^\T\le \mathbf{d}_1I\mathbf{d}_2^\T=\mathbf{d}_1\mathbf{d}_2^\T,\]
where the equality holds if and only if $Q_i=I$.

Therefore,
\[\mathbf{d}_1(W\otimes W)\mathbf{d}_2^\T = \sum_{i=1}^k c_i \mathbf{d}_1Q_i \mathbf{d}_2^\T \le \sum_{i=1}^k c_i \mathbf{d}_1 \mathbf{d}_2^\T= \mathbf{d}_1 \mathbf{d}_2^\T,\]
where the equality holds if and only if 
\[W\otimes W=I,\]
meaning that $W$ is diagonal, and diagonal elements are $\pm 1$.
\end{proof}

We now have
\[B=L_0^{-\T} P_1 W P_2^\T L_1^\T,\]
meaning that given $K_0,K_1,K_2$, we can restrict $B$ to $2^n$ possibilities.

\noindent \textbf{Step} (3): For such $B$ that minimizes Eq.~\ref{target2} among $B$ that minimizes Eq.~\ref{target1}, we want to find $B$ that makes $B^\T K_2 B$ close to $K_3$. Here we do not minimize 
\[||K_3 - B^\T K_2 B||_\F^2\]
as it is difficult. Instead, we want to minimize
\begin{equation}
    ||L_1^{-1}(K_3-B^\T K_2 B)L_1^{-\T} ||_\F^2.
    \label{target3}
\end{equation}
Define
\[G = P_2^\T L_1^{-1}K_3 L_1^{-\T} P_2,\]
and
\[H=P_1^\T L_0^{-1} K_2 L_0^{-\T} P_1.\] 
Eq.~\ref{target3} equals
\begin{equation}
\begin{split}
        &||P_2^\T L_1^{-1}K_3 L_1^{-\T} P_2- W P_1^\T L_0^{-1} K_2 L_0^{-\T} P_1 W||_\F^2\\
        =&||G-WHW||_\F^2= ||G||_\F^2+||WHW||_\F^2-2||G\otimes (WHW)||_\F^2\\
        =&||G||_\F^2+||H||_\F^2-2||G\otimes (WHW)||_\F^2.
\end{split}
    \label{wgh}
\end{equation}
We want to find diagonal $W$ with $\pm 1$ that minimizes Eq.~\ref{target3}, which is equivalent to maximizing $||G\otimes (WHW)||_\F^2$. Define $C=G\otimes H$, which is still positive definite and symmetric by Schur product theorem \cite{choudhury1990schur}. Assume that $C$ does not have repeated eigenvalues. Denote the diagonal elements of $W$ by $\mathbf{w}=[w_1,\ldots,w_n]$. Then we need to maximize
\[||G\otimes (WHW)||_\F^2=\mathbf{w} C \mathbf{w}^\T.\]
Now we relax this problem from $\mathbf{w}$ with $w_i=\pm 1$ to general $\mathbf{v}$ with $||\mathbf{v}||_2^2=n$:
\[\max_{||\mathbf{v}||_2^2=n}\mathbf{v}C\mathbf{v}^\T.\]
The unique solution (up to a $\pm$ sign) is the eigenvector that corresponds to the largest eigenvalue of $C$. 

After obtaining $\mathbf{v}$, we just need to project it to $[w_1,\ldots,w_n]$ with $w_i=\pm 1$ by taking the sign of each term: $w_i=\mathrm{sign}(v_i)$. Then construct $W$ by putting $[w_1,\ldots,w_n]$ on the diagonal.

This relaxation does not guarantee obtaining the optimal solution for Eq.~\ref{wgh}, but we find that it produces the correct answer in almost all simulations. Alternatively, we can directly determine the optimal $W$ for Eq.~\ref{wgh} by brute-force search, as there are finitely many ($2^n$) possibilities of $W$.

With
\[B=L_0^{-\T} P_1 W P_2^\T L_1^\T,\]
given $K_0$, $K_1$, $K_2$, $K_3$, we can uniquely determine $B$ up to a $\pm$ sign. Since $B^\T KB=(-B)^\T K (-B)$, more $K_i$ cannot provide more information.

\noindent \textbf{Step} (4): For $B$ and $-B$ from Step (3), we determine which satisfies Eqs.~\ref{eqx0}--\ref{eqx2} better. Notice that Eqs.~\ref{eqx0}--\ref{eqx2} share the same unknown $\mathbf{c}$. Define 
\[\mathbf{c}_0=(\mathbf{x}_1-\mathbf{x}_0 B)/t, \ \mathbf{c}_1=(\mathbf{x}_2-\mathbf{x}_1 B)/t, \ \mathbf{c}_2=(\mathbf{x}_3-\mathbf{x}_2 B)/t, \ \bar{\mathbf{c}}=(\mathbf{c}_0+\mathbf{c}_1+\mathbf{c}_2)/3.\]
Then the error for fitting Eqs.~\ref{eqx0}--\ref{eqx2} is 
\[||\mathbf{c}_0-\bar{\mathbf{c}}||_2^2+||\mathbf{c}_1-\bar{\mathbf{c}}||_2^2+||\mathbf{c}_2-\bar{\mathbf{c}}||_2^2=||\mathbf{c}_0||_2^2+||\mathbf{c}_1||_2^2+||\mathbf{c}_2||_2^2-3||\bar{\mathbf{c}}||_2^2.\]
We just need to compare the errors of $B$ and $-B$ and choose the smaller one as the output.

With expression data from four time points, we uniquely determine $B$. See Algorithm~\ref{alg1} for the workflow of the QWENDY method. {The assumptions in the derivation should be checked.}

\begin{algorithm}[!htbp]
	\caption{Workflow of the QWENDY method.}
	\label{alg1}
	\ \\
	\begin{enumerate}
		{	\item \textbf{Input} expression levels of $n$ genes over $m$ cells at time points $0$, $t$, $2t$, $3t$
			
			\item \textbf{Calculate} covariance matrices $K_0$, $K_1$, $K_2$, $K_3$, and mean levels $\mathbf{x}_0$, $\mathbf{x}_1$, $\mathbf{x}_2$, $\mathbf{x}_3$
            
            {(\textbf{Check} $K_0$, $K_1$, $K_2$, $K_3$ are invertible)}

                \item \textbf{Calculate} Cholesky decomposition
                \[K_1=L_1 L_1^\T, \ K_0=L_0 L_0^\T\]

                \textbf{Calculate} eigenvalue decomposition
                \[L_0^{-1} K_1 L_0^{-\T} = P_1 D_1 P_1^\T, \ L_1^{-1}K_2L_1^{-\T} = P_2 D_2 P_2^\T\]

                {(\textbf{Check} $L_0^{-1} K_1 L_0^{-\T}$ and $L_1^{-1} K_2 L_1^{-\T}$ each has distinct eigenvalues)}

                \textbf{Calculate} 
                \[C=(P_2^\T L_1^{-1}K_3 L_1^{-\T} P_2)\otimes (P_1^\T L_0^{-1} K_2 L_0^{-\T} P_1)\]

                {(\textbf{Check} $C$ has distinct eigenvalues)}
		
			\item \textbf{Calculate} $\mathbf{v}$, the eigenvector that corresponds to the largest eigenvalue of $C$, and the projection $\mathbf{w}$ with $w_i=\mathrm{sign}(v_i)$

                \textbf{Construct} $W$ with $\mathbf{w}$ on diagonal
                
			\item \textbf{Calculate} $B=L_0^{-\T} P_1 W P_2^\T L_1^\T$
            
            \textbf{Compare} total squared errors of $B$ and $-B$ for Eqs.~\ref{eqx0}--\ref{eqx2}

            \item \textbf{Output} $B$ or $-B$, the one that corresponds to the smaller error, and the GRN $A=(B-I)/t$

            {(If any check fails, output a warning that the result might be problematic)}
			
		}
	\end{enumerate}
\end{algorithm}

\subsection{Correctness of the QWENDY method}
\begin{theorem}
If covariance matrices $K_0$, $K_1$, $K_2$, $K_3$ and mean levels $\mathbf{x}_0$, $\mathbf{x}_1$, $\mathbf{x}_2$, $\mathbf{x}_3$ satisfy Eqs.~\ref{eqk0}--\ref{eqx2} for some $B_0$, then QWENDY will output $B_0$ {under the following conditions:

\noindent (1) $K_0$, $K_1$, $K_2$, $K_3$ are invertible;

\noindent (2) $L_0^{-1} K_1 L_0^{-\T}$ and $L_1^{-1} K_2 L_1^{-\T}$ each has distinct eigenvalues;

\noindent (3) $C$ has distinct eigenvalues.}
\label{thm1}
\end{theorem}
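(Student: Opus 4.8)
The plan is to substitute the ground-truth $B_0$ into the four steps of QWENDY and check that, at each step, $B_0$ lies in the set the step keeps and that this set shrinks exactly as claimed, until only $\pm B_0$ survive Step~(3) and $B_0$ alone survives Step~(4). Because each of Eqs.~\ref{eqk0}--\ref{eqk2} holds exactly for $B_0$, each target in Eqs.~\ref{target1}, \ref{target2}, \ref{target3} equals $0$ at $B_0$; the work is therefore to show $0$ is the global minimum at every stage and that $B_0$ is the essentially unique minimizer.

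For Steps~(1) and (2) I would set $O_0=L_1^{-1}B_0^\T L_0$ and compute $O_0 O_0^\T = L_1^{-1}B_0^\T K_0 B_0 L_1^{-\T}=L_1^{-1}K_1 L_1^{-\T}=I$ by Eq.~\ref{eqk0}, so $O_0$ is orthonormal and $B_0$ minimizes Eq.~\ref{target1}. The decisive sublemma for Step~(2) is $D_1=D_2$: Eqs.~\ref{eqk0}--\ref{eqk1} give $K_1^{-1}K_2=B_0^{-1}(K_0^{-1}K_1)B_0$, so $K_0^{-1}K_1$ and $K_1^{-1}K_2$ are similar; as $D_1,D_2$ are their common eigenvalues in increasing order, distinct by condition~(2), they coincide. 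Writing $W_0=P_2^\T O_0 P_1$, Eq.~\ref{eqk1} forces Eq.~\ref{wd} to vanish, i.e.\ $D=W_0 D W_0^\T$ with $D:=D_1=D_2$; since $D$ has distinct diagonal entries, $W_0$ commutes with $D$, hence is diagonal, and being orthonormal its entries are $\pm1$. Lemma~\ref{lemma1} (using the strict orderings from condition~(2)) shows these are precisely the minimizers, so after Step~(2) exactly the $2^n$ matrices with $W$ diagonal $\pm1$ survive, and $B_0$ is one of them.

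For Step~(3), the vanishing of Eq.~\ref{target3} at $B_0$ gives $G=W_0 H W_0$, hence $C_{ij}=G_{ij}H_{ij}=(w_0)_i (w_0)_j H_{ij}^2$, that is $C=W_0 C' W_0$ with $C'_{ij}=H_{ij}^2\ge 0$ entrywise and $C'_{ii}=H_{ii}^2>0$ since $H$ is positive definite. Maximizing $\mathbf{w}C\mathbf{w}^\T$ over $\mathbf{w}\in\{\pm1\}^n$ is equivalent, via $\mathbf{u}=\mathbf{w}W_0$, to maximizing $\mathbf{u}C'\mathbf{u}^\T$, and since $C'\ge 0$ entrywise the choice $\mathbf{u}=\pm\mathbf{1}$ is optimal, giving $\mathbf{w}=\pm\mathbf{w}_0$ and $B=\pm B_0$. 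For the relaxed version in Algorithm~\ref{alg1}, I would show the dominant eigenvector of $C$ is $W_0\mathbf{p}$, where $\mathbf{p}$ is the Perron eigenvector of $C'$; condition~(3) makes this eigenvector unique up to sign and Perron--Frobenius makes $\mathbf{p}$ entrywise positive, so $\operatorname{sign}(\mathbf{v})=\pm\mathbf{w}_0$ and again $B=\pm B_0$. Finally, for Step~(4), at $B_0$ all three vectors $\mathbf{c}_0,\mathbf{c}_1,\mathbf{c}_2$ equal the common $\mathbf{c}$ of Eqs.~\ref{eqx0}--\ref{eqx2}, so its error is $0$, whereas for $-B_0$ one finds $\mathbf{c}_i-\mathbf{c}=2\mathbf{x}_i B_0/t$, whose spread vanishes only if $\mathbf{x}_0=\mathbf{x}_1=\mathbf{x}_2$; non-stationarity rules this out, so $-B_0$ has strictly positive error and QWENDY returns $B_0$.

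I expect the main obstacle to be the sign recovery in Step~(3): proving that the dominant eigenvector of $C$ has the same sign pattern as $\mathbf{w}_0$. The clean combinatorial claim that $\pm\mathbf{w}_0$ are the only $\{\pm1\}$-maximizers needs the support graph of $C'$ to be connected, i.e.\ $C'$ irreducible, and the eigenvector statement needs Perron--Frobenius to deliver a strictly positive Perron vector; condition~(3) supplies uniqueness of the dominant eigenvector but not irreducibility on its own, so this point requires care (or one falls back on the exact brute-force maximization of $\mathbf{w}C\mathbf{w}^\T$ noted after Eq.~\ref{wgh}). The remaining steps are routine linear-algebra verifications once the similarity $K_1^{-1}K_2=B_0^{-1}(K_0^{-1}K_1)B_0$ and the congruence $C=W_0 C' W_0$ are established.
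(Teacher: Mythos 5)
Your proposal follows the same overall route as the paper's proof: substitute the exact $B_0$ into the four steps, show $O_0=L_1^{-1}B_0^\T L_0$ is orthonormal (Step (1)), deduce $D_1=D_2$ so that $W_0=P_2^\T O_0 P_1$ commutes with $D_1$ and is therefore diagonal with $\pm 1$ entries (Step (2)), and use the means, together with the implicit non-stationarity assumption $\mathbf{x}_1\ne\mathbf{x}_2$, to reject $-B_0$ (Step (4)). Your derivation of $D_1=D_2$ from the similarity $K_1^{-1}K_2=B_0^{-1}(K_0^{-1}K_1)B_0$ is a cosmetic variant of the paper's direct computation $D_2=W_0D_1W_0^\T$; all of these parts are correct and match the paper.

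The difference is Step (3), and the obstacle you flagged there is not a defect of your write-up: it is a genuine gap in the paper's own proof. The paper asserts that $\mathbf{w}_0$ ``is the unique solution to $\max_{||\mathbf{v}||_2^2=n}\mathbf{v}C\mathbf{v}^\T$, namely the eigenvector of the largest eigenvalue,'' using condition (3) only for uniqueness; it never verifies that $\mathbf{w}_0$ is an eigenvector of $C$ at all. Your congruence $C=W_0C'W_0$ with $C'=H\otimes H$ identifies exactly what is needed: the dominant eigenvector of $C$ is $W_0\mathbf{p}$ with $\mathbf{p}$ the Perron vector of $C'$, and the sign projection returns $\pm\mathbf{w}_0$ precisely when $\mathbf{p}$ has no zero entries, which Perron--Frobenius guarantees if $C'$ is irreducible (e.g., all entries of $H$ nonzero) but not in general. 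Conditions (1)--(3) do not imply this. Concretely, take $K_0=I$ and $B_0=\mathrm{diag}(a,b)$ with $0<a<b$: then $K_i$, $L_i$ are diagonal, $P_1=P_2=I$, $W_0=I$, and $C=\mathrm{diag}(a^8,b^8)$, so conditions (1)--(3) all hold; yet the dominant eigenvector of $C$ is $(0,1)$, whose sign pattern is not $\pm(1,1)$, and even the brute-force maximization of $\mathbf{w}C\mathbf{w}^\T$ is tied over all four sign vectors. Indeed $\mathrm{diag}(a,-b)$ satisfies Eqs.~\ref{eqk0}--\ref{eqk2} exactly, so no procedure based on $K_0,\ldots,K_3$ alone can isolate $\pm B_0$ here, and Step (4) never compares against $\mathrm{diag}(a,-b)$. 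Hence Theorem~\ref{thm1} actually requires an additional hypothesis (irreducibility of $C$, or nonvanishing of its entries), under which your Perron--Frobenius argument completes the proof rigorously; the paper's proof, as written, skips this point entirely. In short, your proof is the paper's argument done more carefully, and where you stopped short is exactly where the published proof is incomplete.
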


\begin{proof}
{Due to condition (1), $L_0^{-1}$ and $L_1^{-1}$ exist.} If $K_1=B_0^\T K_0 B_0$, then 
\[(L_1^{-1} B_0^\T L_0)(L_0^\T B_0 L_1^{-\T})=I,\]
and $L_1^{-1} B_0^\T L_0=O_0$ for some orthonormal $O_0$. Thus
\[B_0=L_0^{-\T}O_0^\T L_1^\T,\] 
which is among the calculated $B$ in Step (1).

{Due to condition (2), $P_1,D_1,P_2,D_2$ are uniquely defined.} If $K_2=B_0^\T K_1 B_0$, then 
\[D_2 = P_2^{\T} O_0 P_1 D_1 P_1^\T O_0^\T P_2.\]
Since $P_2^{\T} O_0 P_1$ is orthonormal, $D_1$ and $D_2$ are similar and have the same eigenvalues. Therefore, $D_1$ and $D_2$ as diagonal matrices with increasing diagonal values are equal. Due to condition (2), $d_1,\ldots,d_n$ of $D_1$ are distinct. Define $W_0=P_2^\T O_0 P_1$, then 
\[D_1 W_0=W_0D_1,\]
and $W_0[i,j]d_i=W_0[i,j]d_j$. For $i\ne j$, we have $d_i \ne d_j$, meaning that $W_0[i,j]=0$. Now $W_0$ is diagonal and orthonormal, implying that its diagonal elements are $1$ or $-1$, and 
\[B_0=L_0^{-\T} P_1 W_0 P_2^\T L_1^\T\]
is among the calculated $B$ in Step (2).

If $K_3=B_0^\T K_2 B_0$, then 
\[P_2^\T L_1^{-1}K_3 L_1^{-\T} P_2= W_0 P_1^\T L_0^{-1} K_2 L_0^{-\T} P_1 W_0.\]
Define $\mathbf{w}_0=[w_1,\ldots,w_n]$ to be the diagonal elements of $W_0$. Then from Step (3), $W_0$ minimizes Eq.~\ref{target3}, and $\mathbf{w}_0$ is the unique solution to 
\[\max_{||\mathbf{v}||_2^2=n}\mathbf{v}C\mathbf{v}^\T,\]
namely the eigenvector of the largest eigenvalue. {Here the uniqueness of the solution is from condition (3).} Since $w_i=\pm 1$, the projection $w_i=\mathrm{sign}(v_i)$ has no effect, and 
\[B_0=L_0^{-\T} P_1 W_0 P_2^\T L_1^\T\]
is the unique $B$ (up to a $\pm$ sign) calculated in Step (3). 

From Eqs.~\ref{eqx0}--\ref{eqx2},
\[\mathbf{x}_1-\mathbf{x}_0 B_0=\mathbf{x}_2-\mathbf{x}_1 B_0.\]
Assume $\mathbf{x}_1 \ne \mathbf{x}_2$, then 
\[\mathbf{x}_1-\mathbf{x}_0 (-B_0)\ne \mathbf{x}_2-\mathbf{x}_1 (-B_0).\]
Therefore, $-B_0$ does not satisfy Eqs.~\ref{eqx0}--\ref{eqx2}, and Step (4) chooses the correct $B_0$ from the two possibilities in Step (3).
\end{proof}

\section{Performance on synthetic and experimental data}
\label{sec4}
The same as in the WENDY paper and the TRENDY paper \cite{wang2024gene,tian2025trendy}, we test the QWENDY method on synthetic data sets SINC and DREAM4, and experimental data sets THP-1 and hESC. SINC data set \cite{tian2025trendy} and DREAM4 data set \cite{marbach2012wisdom} are generated by simulating stochastic differential equation systems. THP-1 data set is from monocytic THP-1 human myeloid leukemia cells \cite{kouno2013temporal}. hESC data set is from human embryonic stem cell-derived progenitor cells\cite{chu2016single}. {Besides, we also test the QWENDY method on experimental data set mESC \cite{hayashi2018single}, which is from primitive endoderm cells differentiated from mouse embryonic stem cells.} THP-1, hESC, and mESC data sets each has only one group of data (one ground truth GRN and the corresponding expression levels); DREAM4 data set has five groups of data; SINC data set has 1000 groups of data. 

For SINC data set and DREAM4 data set, we use data from any four consecutive time points and take average. For THP-1, hESC, and mESC data sets, we choose any four time points with equal difference. This is due to the limitation that QWENDY is derived for four evenly spaced time points.

We test the performance of the QWENDY method on these five data sets and compare it with the performance of 16 methods tested on the same data sets in the TRENDY paper \cite{tian2025trendy}. We compare the inferred GRN and the ground truth GRN by calculating the AUROC and AUPRC scores \cite{wang2024gene}. These two scores, both between $0$ and $1$, evaluate the level of matching under different thresholds, where $1$ means perfect match, and $0$ means perfect mismatch. {AUROC and AUPRC have an advantage that they do not have parameters that can be chosen manually, which guarantees fair comparison.}

See Table~\ref{newtab} for the performance of QWENDY, compared with other 16 previously tested methods. Although it does not perform very well on synthetic data sets, QWENDY has the best performance on experimental data sets.

\begin{table}[ht]
    \centering
    \begin{tabular}{ccccc}
        \toprule
         &  & QWENDY & \begin{tabular}[c]{@{}l@{}}Best in\\ other 16\end{tabular} & \begin{tabular}[c]{@{}l@{}}Rank of QWE-\\NDY in all 17\end{tabular}  \\
        \midrule
         &      &      &      &          \\
      \multirow{2}{*}{SINC}   & AUROC     & 0.5107 & 0.8703 & 11th \\
         &  AUPRC    & 0.5537 & 0.7672 & 7th  \\
         &      &      &      &            \\
      \multirow{2}{*}{DREAM4}   &   AUROC   & 0.4987 & 0.5741 & 11th \\
         &  AUPRC    & 0.1844 & 0.2452 & 13th \\
         &      &      &      &          \\
        \begin{tabular}[c]{@{}c@{}}Synthetic\\ total\end{tabular} &      &  1.7475    &   2.3584   & 11th           \\
        \midrule
         &      &      &      &          \\
       \multirow{2}{*}{THP-1}  &   AUROC   & 0.5524 & 0.6261 & 8th \\
         &   AUPRC   & 0.4294 & 0.4205 & 1st \\
         &      &      &      &            \\
       \multirow{2}{*}{hESC}  & AUROC     & 0.6019 & 0.6233 & 3rd \\
         &   AUPRC   & 0.0435 & 0.0641 & 8th \\
         &      &      &      &            \\
         \multirow{2}{*}{mESC}  & AUROC     & 0.5230 & 0.5896 & 5th \\
         &   AUPRC   & 0.0507 & 0.0630 & 8th \\
         &      &      &      &            \\
       \begin{tabular}[c]{@{}c@{}}Experimental\\ total\end{tabular}   &      &    2.2009    &  2.1184    &    1st          \\
       \midrule
       &      &     &      &     \\
      \begin{tabular}[c]{@{}c@{}}Overall\\ total\end{tabular}    &      & 3.9484 & 4.2810 & 5th \\
        \bottomrule
    \end{tabular}
    \caption{AUROC and AUPRC scores of QWENDY on four data sets, compared with 16 previously tested methods}
    \label{newtab}
\end{table}

\section{Conclusion}
\label{sec5}
In this paper, we present QWENDY, a GRN inference method that requires single-cell gene expression data measured at four time points. QWENDY ranks the first in 17 GRN inference methods on experimental data sets. Notice that each experimental data set only has one group of data, meaning that the performance of each method has a large uncertainty level.

{QWENDY performs much worse on synthetic data sets. One possibility is that gene regulation can have different dynamics, and QWENDY only works on some of them, which happen to match the experimental data sets. Another possibility is that the dynamics of generating synthetic data might not match reality.}

{When the gene expression is at steady state, covariance matrix is time-invariant, $K_0=K_1$. This means $L_0^{-1} K_1 L_0^{-\T}=I$ with repeated eigenvalue $1$. Then condition (2) and thus Theorem~\ref{thm1} would fail. Therefore, QWENDY does not work at steady state. GRN inference methods like Dictys \cite{wang2023dictys} should be applied instead.}

The QWENDY method requires measurements at four \textbf{evenly spaced} time points $t_0$, $t_1$, $t_2$, $t_3$, meaning that $t_1-t_0=t_2-t_1=t_3-t_2$. Otherwise, the dynamics of covariance matrices is much more complicated, and we do not have an explicit solution. One possible strategy is to use the approximation
\[I+(t_2-t_1)A\approx \frac{t_2-t_1}{t_1-t_0}\left[I+(t_1-t_0)A\right],\]
so that different $B$ in Eqs.~\ref{eqk0}-\ref{eqk2} only differ by a constant factor. This means that we can relocate this factor to $K_i$, and $B$ is the same for all equations.

Since QWENDY needs four time points, the duration of the whole experiment might be too long, so that the dynamics of gene expression under regulation might have changed during this time. QWENDY is based on a time-homogeneous model, which might fail in this situation.

{\color{black}QWENDY directly solves the GRN, not to search for the best match in a space of possible GRNs. In comparison, WENDY is an optimization algorithm, and it can incorporate prior information about the GRN by restricting the searching space \cite{wang2024gene}. Therefore, data besides mRNA count, such as motif analysis or genetic perturbation data, cannot be used inside QWENDY. One choice is to use such data to modify the results of QWENDY. Another choice is to stop QWENDY when it limits the GRN in a finite set, and directly search for the best match for extra information.}

{\color{black}Since QWENDY is based on a linear approximation, each diagonal element of the inferred GRN that represents the effect of one gene to itself is an indistinguishable mixture of autoregulation and natural mRNA degradation \cite{wang2023inference}. Therefore, it is not recommended to infer the existence of autoregulation with QWENDY or its variants. }

{\color{black}\section{Discussion: the present and the future of GRN inference}
\label{sec6}
\subsection{Current status: experiments}

The input of GRN inference methods is determined by biological measurement techniques. As in 2025, it is common to measure the mRNA counts of different genes in single cells. However, due to the cost of time and money, each time point generally has only hundreds of cells, much smaller than the number of genes measured. Besides, the measurement is not very accurate, that there are many zero reads, meaning the loss of corresponding mRNAs during measurement. 

The major problem of the popular scRNA-seq measurement is that cells are killed, and we cannot measure one cell more than once, which makes GRN inference mathematically difficult. There are some new techniques to read mRNA count or protein count for the same cell at multiple time points and provide more information. However, they have low accuracy and only work for a small number of genes, meaning that not all genes in the GRN are guaranteed to be included. 

Some measurements such as motif analysis can provide direct information about the GRN, but the information is limited. The most reliable method to obtain the GRN is gene knockout or knockdown. However, it is costly to obtain a gold standard GRN through this method. Thus there are only a few fully reliable GRNs.

On the other side, since gene expression is confined in living cells, we do not fully understand the dynamics of gene expression and regulation. Thus given a GRN, it is difficult to simulate gene expression data.

\subsection{Current status: inference methods}
Due to the limitation of experimental measurements, different GRNs have a chance to produce the same scRNA-seq data. Thus it is theoretically difficult to infer the GRN with high accuracy, and biologists should not have too much faith in every inferred regulation relation. Still, many methods, especially QWENDY, are significantly better than random guess and can provide more true information than false information.

The central problem for the field of GRN inference is the lack of gold standard. Only a few experimental data sets can be used for evaluation, meaning that one method with good performance might just have good luck or wisely-selected hyper-parameters. For synthetic data, although the data quantity issue is solved, we do not know whether such data reflect reality, and inference methods might overfit to the simulator. 

For information-based inference methods, the advantage is that information or correlation or predictability between genes represents reliable relationship. The disadvantage is the interpretation of this relationship: direct regulation, indirect regulation, confounder, collider can all lead to information, but GRN only contains direct regulations. Besides, since one cell cannot be measured twice, it is difficult to determine which is the cause and which is the result. It is also difficult to determine whether the regulation is positive or negative.

For model-based methods, to make the problem solvable, assumptions and simplifications are necessary. For example, many model-based GRN inference methods rely on linearization and do not consider the on-off switch of genes or the count of proteins. Therefore, such models cannot fully match all experimental phenomena. This means that corresponding methods cannot reach a high accuracy.

For popular deep learning methods, they have proven their capability in various fields and could perform well in GRN inference, if there are enough data. The problem is that the amount of data required is too large, but we cannot conduct so many experiments. Besides, as shown in Appendix C, not all popular deep learning structures are suitable for GRN inference. 

In sum, due to the data quality and quantity issue, there is an unsatisfactory upper bound for inference accuracy. Some well-studied data types might not have the space for a significantly better new method. Some other data types, such as the one studied in this paper, has only a few corresponding methods and deserves more research.

\subsection{Future of experiments and inference methods}
The future of GRN inference methods heavily depends on the type and the amount of data available, determined by the development of biotechnology. 

(1) If some future research makes GRNs easy and cheap to determine through direct experiments, then the whole field of GRN inference will disappear. 

(2) If it becomes easy and accurate to measure the same cell multiple times, information-based methods might have the best performance. The major obstacle of information-based methods is the difficulty to determine the direction of regulation. With real time series data, we can easily determine that the one happens earlier regulates the one happens later.

(3) If real time series data are still not applicable or reliable, but there are many scRNA-seq data sets with multiple time points, sufficiently many cells, and high accuracy, then inference methods based on models for covariance matrices (such as QWENDY) can be promising. If we can know better of gene expression and regulation, but not enough for reliable gene expression data generators, then model-based methods should be further developed.

(4) If the dynamics of gene expression and regulation can be fully understood, we can obtain very reliable gene expression data generators, and deep learning methods can flourish. However, if GRN inference is needed, then it means that the gene expression dynamics for this situation is unknown, and we cannot guarantee that the dynamics matches the data generator. 

From situation (1) to situation (4), the reliability of the obtained GRN should decrease.
}

\section*{Acknowledgments}
Y.W. would like to thank Dr. Mingda Zhang for a helpful discussion. 

\section*{Declarations}
The authors declare no competing interests.

\section*{Data and code availability}
Data and code files used in this paper can be found in 
\begin{verbatim}
https://github.com/YueWangMathbio/QWENDY
\end{verbatim}

\appendix
\renewcommand{\thealgocf}{S\arabic{algocf}}
\setcounter{algocf}{0}
\renewcommand{\thetable}{S\arabic{table}}
\setcounter{table}{0}
\renewcommand{\theequation}{S\arabic{equation}}
\setcounter{equation}{0}

{
\section{Performance of QWENDY on full hESC data set}
The hESC data set we use was processed by Matsumoto et al. \cite{matsumoto2017scode}. The original version has 100 genes. Since the original data have many zero reads, WENDY and GENIE3 methods would fail. Therefore, in the WENDY paper \cite{wang2024gene}, we only kept genes that have nonzero reads in at least 95\% cells at each time point. Then only 18 genes remain. This 18-gene hESC data set was then used in the TRENDY paper \cite{tian2025trendy} and the main text of this paper. 

One problem is that the 18-gene hESC data set misses some essential genes in embryonic development. Therefore, we also use the 100-gene hESC data set. Since half of 16 methods (all WENDY-based and GENIE3-based) tested in the TRENDY paper \cite{tian2025trendy} fail on this data set, we only apply the QWENDY method. Besides, the ground truth GRN of this data set is based on motif analysis of open chromatin, not genetic perturbation data, making it less reliable. Therefore, we focus on essential genes in embryonic development: GATA6, NANOG, EOMES, SOX2, SOX17, SMAD2, FOXH1, GATA4, POU5F1 \cite{li2019genome}. In the inferred GRN from the 100-gene hESC data set, we calculate the overall regulation power of each gene:
\[R_i = \sum_{j\ne i} |B_{ij}|.\]
Then we rank the regulation power of each gene from high to low. Among all 100 genes, SOX17, GATA6, and GATA4 are among the top 10. The average rank of these essential genes is 30.1. Thus QWENDY correctly infers that such essential genes have strong regulation power.

\section{Details of mESC data set}
The mESC data set measures the gene expression at five time points: 0h, 12h, 24h, 48h, 72h. Each time point has around 100 cells. We use the mESC data set with 100 genes and corresponding ground truth GRN processed by Matsumoto et al. \cite{matsumoto2017scode}. Since the original data have many zero reads, half of 16 methods (all WENDY-based and GENIE3-based) tested in the TRENDY paper \cite{tian2025trendy} fail. Therefore, we only keep genes that have nonzero reads in at least 95\% cells at each time point. Now only 34 genes remain, and we can test all methods. See Table~\ref{mesc} for performance of 16 methods tested in the TRENDY paper.

\begin{table}[ht]
\centering
\caption{AUROC and AUPRC of different methods on the mESC data set}
\begin{tabular}{lcc}
\hline
Method & AUROC & AUPRC \\
\hline
WENDY          & 0.4857 & 0.0411 \\
TRENDY         & 0.4655 & 0.0489 \\
nWENDY         & 0.4273 & 0.0370 \\
bWENDY         & 0.4296 & 0.0370 \\
GENIE3         & 0.5024 & 0.0452 \\
tGENIE3        & 0.5401 & 0.0556 \\
nGENIE3        & 0.4779 & 0.0424 \\
bGENIE3        & 0.4730 & 0.0422 \\
SINCERITIES    & 0.5744 & 0.0630 \\
tSINCERITIES   & 0.4930 & 0.0411 \\
nSINCERITIES   & 0.5896 & 0.0542 \\
bSINCERITIES   & 0.5755 & 0.0517 \\
NonlinearODEs  & 0.4940 & 0.0517 \\
tNonlinearODEs & 0.3737 & 0.0323 \\
nNonlinearODEs & 0.4957 & 0.0518 \\
bNonlinearODEs & 0.4955 & 0.0514 \\
\hline
\end{tabular}
\label{mesc}
\end{table}

}
\section{Enhancing QWENDY with deep learning}
\subsection{Methods}
WENDY method is derived from a simplified gene expression model, and the performance of WENDY is not satisfactory when the model does not fit with experiments. We developed the TRENDY method \cite{tian2025trendy}, which trains a transformer model, a deep learning architecture that can be applied to various fields \cite{waswani2017attention}, to transform the input data to better fit the gene expression model. Then the transformed data will produce more accurate GRNs by WENDY. Such more accurate GRNs will be further enhanced by another transformer model.

We use the idea of TRENDY to enhance QWENDY. The enhanced version of QWENDY that trains two new transformer models is named TEQWENDY. Besides, we also present another approach that fine-tunes a large language model (LLM) to replace the transformer model, since it has a large pre-trained transformer section. This LLM-enhanced version is named LEQWENDY. The training of TEQWENDY and LEQWENDY uses synthetic data.

The QWENDY method is derived from Eqs.~\ref{eqk0}--\ref{eqx2}, especially Eqs.~\ref{eqk0}--\ref{eqk2}, which are a linear approximation of the actual nonlinear gene expression dynamics. Therefore, real $K_0$, $K_1$, $K_2$, $K_3$ might not fit with Eqs.~\ref{eqk0}--\ref{eqk2}, and it might not be feasible to apply QWENDY directly to real $K_0$, $K_1$, $K_2$, $K_3$. This problem already exists for the WENDY method, where the input matrices $K_0$, $K_1$ might not satisfy 
\[K_1=B^\T K_0 B.\]

For this problem, the TRENDY method \cite{tian2025trendy} proposes that we can construct $K_1^*=B^\T K_0 B$, and train a model with input $K_1$ and target $K_1^*$, so that the output $K_1'$ is close to $K_1^*$. Then we can apply WENDY to $K_0$, $K_1'$ to obtain a more accurate GRN $A_1$. After that, we can train another model with input $K_0$, $K_1$, $A_1$, and the target is the true GRN $A_{\text{true}}$. Then the final output $A_2$ is more similar to the true GRN than $A_1$.

Inspired by TRENDY, we propose a similar solution to enhance QWENDY. See Algorithms~\ref{algt},\ref{alge} for details. Since $\mathbf{x}_0$, $\mathbf{x}_1$, $\mathbf{x}_2$, $\mathbf{x}_3$ are only used to distinguish between $B$ and $-B$, it is not necessary to train another model for them.

\begin{algorithm}[!htbp]
	\caption{Training workflow of LEQWENDY/TEQWENDY method. For those two matrix-learning models, LEQWENDY adopts LE structure, and TEQWENDY adopts TE structure.}
	\label{algt}
	\ \\
	\begin{enumerate}
		{	\item \textbf{Repeat} generating random GRN $A_\Z$ and corresponding gene expression data at time $0$, $t$, $2t$, $3t$ from a gene expression simulator
			
			\item \textbf{Calculate} covariance matrices $K_0$, $K_1$, $K_2$, $K_3$ and mean levels $\mathbf{x}_0$, $\mathbf{x}_1$, $\mathbf{x}_2$, $\mathbf{x}_3$, and then calculate 
            \[K_0^*=K_0, \ K_1^*=B^\T K_0^* B, \ K_2^*=B^\T K_1^* B, \ K_3^*=B^\T K_2^* B,\]
            where $B=I+tA_\Z$

                \item \textbf{Train} matrix-learning model 1 with inputs $K_0$, $K_1$, $K_2$, $K_3$ and target $K_0^*$, $K_1^*$, $K_2^*$, $K_3^*$

                \textbf{Call} trained matrix-learning model 1 to calculate $K_0'$, $K_1'$, $K_2'$, $K_3'$ from $K_0$, $K_1$, $K_2$, $K_3$
		
			\item \textbf{Call} QWENDY to calculate $A_1$ from $K_0'$, $K_1'$, $K_2'$, $K_3'$, $\mathbf{x}_0$, $\mathbf{x}_1$, $\mathbf{x}_2$, $\mathbf{x}_3$ 

			\item \textbf{Train} matrix-learning model 2 with input $A_1$, $K_0$, $K_1$, $K_2$, $K_3$ and target $A_\Z$
			
		}
	\end{enumerate}
\end{algorithm}

\begin{algorithm}[!htbp]
	\caption{Testing workflow of LEQWENDY/TEQWENDY method. For those two matrix-learning models, LEQWENDY adopts LE structure, and TEQWENDY adopts TE structure.}
	\label{alge}
	\ \\
	\begin{enumerate}
		{	\item \textbf{Input}: gene expression data at four equally spaced time points
			
			\item \textbf{Calculate} covariance matrices $K_0$, $K_1$, $K_2$, $K_3$ and mean levels $\mathbf{x}_0$, $\mathbf{x}_1$, $\mathbf{x}_2$, $\mathbf{x}_3$

                \item \textbf{Call} trained matrix-learning model 1 to calculate $K_0'$, $K_1'$, $K_2'$, $K_3'$ from $K_0$, $K_1$, $K_2$, $K_3$
		
			\item \textbf{Call} QWENDY to calculate $A_1$ from $K_0'$, $K_1'$, $K_2'$, $K_3'$, $\mathbf{x}_0$, $\mathbf{x}_1$, $\mathbf{x}_2$, $\mathbf{x}_3$

			\item \textbf{Call} trained matrix-learning model 2 to calculate $A_2$ from $A_1$, $K_0$, $K_1$, $K_2$, $K_3$

                \item \textbf{Output}: inferred GRN $A_2$
			
		}
	\end{enumerate}
\end{algorithm}

For those two matrix-learning models in Algorithms~\ref{algt},\ref{alge}, we can adopt the approach of TRENDY to construct a transformer structure with three sections: (1) Pre-process the inputs; (2) Use transformer encoder layers to learn high-dimensional representations of the inputs; (3) Construct outputs from the high-dimensional representations. 

For training the transformer encoder layers in this structure, we present two approaches. One approach is to train new transformer encoder layers from scratch, the same as TRENDY. This structure is named ``TE'', meaning ``transformer-enhanced''. 

The other approach is to integrate an LLM and fine-tune it. In general, an LLM has three sections: (1) Convert text to vector representations; (2) Use transformer (encoder layers, decoder layers, or both) to learn contextual relationships; (3) Convert model outputs back to natural language \cite{zhao2023survey}. We can choose an LLM with pre-trained transformer encoder layers in the second section, and use them to replace the transformer encoder layers in the TE structure. In our practice, we use the encoder layers of the RoBERTa-large model \cite{liu2019roberta}, which have about 300 million parameters. 

The pre-trained parameters will be kept frozen while we adopt a parameter-efficient fine-tuning method: LoRA \cite{hu2021lora}. Lower-rank matrices will be trained and added to the frozen encoding layers to obtain a new encoder in a cost-efficient way. The new structure with RoBERTa is named ``LE'', representing ``LLM-enhanced''.

We name Algorithms~\ref{algt},\ref{alge} with TE structure as TEQWENDY, and Algorithms~\ref{algt},\ref{alge} with LE structure as LEQWENDY. See the technical details section for details of TE and LE structures. TEQWENDY has 4.7 million trainable parameters. LEQWENDY has 4.6 million trainable parameters, with 300 million non-trainable (frozen) parameters.

TEQWENDY and LEQWENDY are trained on synthetic data generated by \cite{pinna2010knockouts,papili2018sincerities,wang2024gene,tian2025trendy}
\begin{equation}
\mathrm{d}X_j(t)=V\left\{\beta \prod_{i=1}^n \left[1+(A_\Z)_{i,j}\frac{X_i(t)}{X_i(t)+1}\right]-\theta X_j(t)\right\}\mathrm{d} t +\sigma X_j(t)\mathrm{d}W_j(t),
\label{eqnl}
\end{equation}
where $X_i(t)$ is the level of gene $i$ at time $t$, $W_j(t)$ is a standard Brownian motion, and $V=30$, $\beta=1$, $\theta=0.2$, $\sigma=0.1$ . There are $10^5$ training samples, each with the expression levels of 10 genes for 100 cells, measured at four time points: 0.0, 0.1, 0.2, 0.3. 

QWENDY requires that $K_0$, $K_1$, $K_2$, $K_3$ are symmetric and positive definite. Although the input $K_i$ and the target $K_i^*$ are naturally symmetric and positive definite, the learned output $K_i'$ in Step (3) of Algorithms~\ref{algt},\ref{alge} might not be positive definite, or even symmetric. Therefore, in the implementation of QWENDY, we add two extra steps to adjust the input covariance matrices:

\noindent \textbf{1.} If $K_i$ is asymmetric, replace $K_i$ by $(K_i+K_i^\T)/2$.

\noindent \textbf{2.} If $K_i$ is not positive definite, in the eigenvalue decomposition $K_i=O\Lambda O^\T$, where $O$ is orthonormal, and $\Lambda$ is diagonal, replace negative values of $\Lambda$ by small positive values to obtain $\Lambda'$, and replace $K_i$ by $O\Lambda' O^\T$.

\subsection{Performance}

We also test the performance of TEQWENDY and LEQWENDY on the same data sets. See Table~\ref{tableapp} for the scores. Compared with QWENDY, TEQWENDY performs better on synthetic data, but worse on experimental data. The overall performance of TEQWENDY is worse than QWENDY. Therefore, TEQWENDY might overfit on synthetic data, which does not work on experimental data. The performance of LEQWENDY is worse than QWENDY on both synthetic and experimental data, meaning that LLMs trained with natural language inputs might not directly help with the numerical task in GRN inference, even after task-specific fine-tuning.

Training TEQWENDY and LEQWENDY on data generated by Eq.~\ref{eqnl} does not necessarily increase their performance on experimental data sets. This means that Eq.~\ref{eqnl} might not faithfully reflect the gene expression dynamics. To better integrate deep learning techniques, we need better gene expression data generators. Another possibility is that the training data are only from time 0.0 - 0.3. We could use data from later time points to increase the robustness of training.

\begin{table}[ht]
    \centering
    \begin{tabular}{ccccc}
        \toprule
         & &QWENDY & \begin{tabular}[c]{@{}l@{}}LEQ-\\ WENDY\end{tabular} & \begin{tabular}[c]{@{}l@{}}TEQ-\\ WENDY\end{tabular} \\
        \midrule
         &      &    &     &         \\
      \multirow{2}{*}{SINC}   & AUROC  & 0.5107  & 0.4990 & 0.5932 \\
         &  AUPRC  &0.5537  & 0.5183 & 0.6014 \\
         &      &     &    &         \\
      \multirow{2}{*}{DREAM4}   &   AUROC &0.4987  & 0.5164 & 0.5372 \\
         &  AUPRC  & 0.1844 & 0.1823 & 0.2203 \\
         &      &  &       &         \\
         \begin{tabular}[c]{@{}c@{}}Synthetic\\ total\end{tabular} &  & 1.7475   &  1.7160    &  1.9521           \\
         &      &     &    &         \\
       \multirow{2}{*}{THP-1}  &   AUROC  &0.5524 & 0.5543 & 0.5415 \\
         &   AUPRC & 0.4294 & 0.3632 & 0.3801 \\
         &      &  &       &         \\
       \multirow{2}{*}{hESC}  & AUROC  & 0.6019  & 0.5905 & 0.4815 \\
         &   AUPRC & 0.0435 & 0.0367 & 0.0317 \\
         &      &  &       &         \\
       \multirow{2}{*}{mESC}  & AUROC  & 0.5230  & 0.4793 & 0.2760 \\
         &   AUPRC & 0.0507 & 0.0432 & 0.0284 \\
         &      &  &       &         \\
         \begin{tabular}[c]{@{}c@{}}Experimental\\ total\end{tabular} & &  2.2009   &  2.0672    &   1.7392          \\
         &      &  &       &         \\
      \begin{tabular}[c]{@{}c@{}}Overall\\ total\end{tabular}   & & 3.9484    & 3.7832 & 3.6913 \\
         &      &   &      &         \\
       \begin{tabular}[c]{@{}c@{}}Overall rank\\in all 19\end{tabular}  &   &   5th   &  11th    & 15th     \\
        \bottomrule
    \end{tabular}
    \caption{AUROC and AUPRC scores of QWENDY, LEQWENDY, and TEQWENDY on four data sets. The overall rank is for 19 methods: 16 previously tested methods, QWENDY, LEQWENDY, and TEQWENDY}
    \label{tableapp}
\end{table}

\subsection{Technical details}

For all models in TEQWENDY and LEQWENDY, the loss function is mean squared error; the optimizer is Adam with learning rate $0.001$; the number of training epochs is $100$. After each epoch, we evaluate the model performance on a validation set with 1000 samples. The training stops early and rolls back to the best status if there is no improvement for consecutive $10$ epochs.

See Algorithm~\ref{algt1} for the structure of the first half of TEQWENDY. The inputs are four covariance matrices $K_0$, $K_1$, $K_2$, $K_3$. The targets are four revised covariance matrices $K_0^*$, $K_1^*$, $K_2^*$, $K_3^*$. Notice that the first input matrix ($K_0$) is not processed through these layers, since the target $K_0^*$ equals $K_0$. Besides, $K_1$, $K_2$, $K_3$ are processed separately.

\begin{algorithm}[!htbp]
	\caption{Structure of TEQWENDY method, first half. The shape of data after each layer is in the brackets.}
	\label{algt1}
	\ \\
	\begin{enumerate}
		{	\item \textbf{Input}: four covariance matrices ($4$ groups of $n\times n$)
			
			\item Linear embedding layer with dimension $1$ to $d=64$ ($4$ groups of $n\times n \times d$)

                \item ReLU activation function ($4$ groups of $n\times n \times d$)

                \item Linear embedding layer with dimension $d$ to $d$ ($4$ groups of $n\times n \times d$)
		
			\item 2-D positional encoding layer ($4$ groups of $n\times n \times d$)

			\item Flattening and concatenation ($4$ groups of $(n^2) \times d$)

                \item $7$ layers of transformer encoder with $4$ heads, dimension $d$, feedforward dimension $4d$, dropout rate $0.1$ ($4$ groups of $(n^2) \times d$)

                \item Linear embedding layer with dimension $d$ to $d$ ($4$ groups of $(n^2) \times d$)

                \item LeakyReLU activation function with $\alpha=0.1$ ($4$ groups of $n\times n \times d$)

                \item Linear embedding layer with dimension $d$ to $1$ ($4$ groups of $(n^2) \times 1$)

                \item \textbf{Output}: reshaping into four matrices ($4$ groups of $n \times n$)

		}
	\end{enumerate}
\end{algorithm}

The 2-D positional encoding layer incorporates spatial information of matrix to the input. It generates an $n\times n \times d$ array $\text{PE}$ and adds it to the embedded input:
For $x$ and $y$ in $1,2,\ldots,n$ and $j$ in $1,\ldots,d/4$,
\begin{equation*}
    \begin{split}
         &\text{PE}[x,y,2j-1]=\cos[(x-1)\times 10^{-16(j-1)/d}],\\
&\text{PE}[x,y,2j]=\sin[(x-1)\times 10^{-16(j-1)/d}],\\
& \text{PE}[x,y,2j-1+d/2]=\cos[(y-1)\times 10^{-16(j-1)/d}],\\
& \text{PE}[x,y,2j+d/2]=\sin[(y-1)\times 10^{-16(j-1)/d}].
    \end{split}
\end{equation*}

See Algorithm~\ref{algt2} for the structure of the second half of TEQWENDY. After obtaining the outputs $K_0'$, $K_1'$, $K_2'$, $K_3'$ from $K_0$, $K_1$, $K_2$, $K_3$ by the first half of TEQWENDY, call the QWENDY method to calculate the inferred GRN $A_1$ from $K_0'$, $K_1'$, $K_2'$, $K_3'$. The inputs of the second half of TEQWENDY are four covariance matrices $K_0$, $K_1$, $K_2$, $K_3$, and the inferred GRN $A_1$. The target is the ground truth GRN $A_\Z$. 

\begin{algorithm}[!htbp]
	\caption{Structure of TEQWENDY method, second half. The shape of data after each layer is in the brackets.}
	\label{algt2}
	\ \\
	\begin{enumerate}
		{	\item \textbf{Input}: four covariance matrices and one inferred GRN ($5$ groups of $n\times n$)
			
			\item Linear embedding layer with dimension $1$ to $d=64$ ($5$ groups of $n\times n \times d$)

                \item Segment embedding layer ($5$ groups of $n\times n \times d$)
		
			\item 2-D positional encoding layer ($5$ groups of $n\times n \times d$)

			\item Flattening and concatenation ($(n^2) \times (5d)$)

                \item $3$ layers of transformer encoder with $4$ heads, dimension $5d$, feedforward dimension $20d$, dropout rate $0.1$ ($(n^2) \times (5d)$)

                \item Linear embedding layer with dimension $5d$ to $1$ ($(n^2) \times 1$)

                \item \textbf{Output}: reshaping into one matrix ($n \times n$)
			
		}
	\end{enumerate}
\end{algorithm}

The segment embedding layer generates different trainable $d$-dimensional vectors for all five inputs. Then each vector is copied into dimension $n\times n \times d$, and added to the embedded inputs. This layer incorporates the source of inputs. The final input of the transformer encoder layers is a matrix, with shape $k\times D$, where $k$ is the total number of input values, and $D$ is the representation dimension. For each location in $1,2,\ldots,k$, the segment embedding marks which input matrix it is from, and the position encoding marks which position in the matrix it is from. These two layers solve the problem that the inputs are multiple matrices, but the inputs of transformer are representations of a 1-D sequence.

See Algorithm~\ref{algl1} for the structure of the first half of LEQWENDY. The inputs are four covariance matrices $K_0$, $K_1$, $K_2$, $K_3$. The targets are four revised covariance matrices $K_0^*$, $K_1^*$, $K_2^*$, $K_3^*$. Notice that the first input matrix ($K_0$) is not processed through these layers, since the target $K_0^*$ equals $K_0$. Besides, $K_1$, $K_2$, $K_3$ are processed separately.

\begin{algorithm}[!htbp]
	\caption{Structure of LEQWENDY method, first half. The shape of data after each layer is in the brackets.}
	\label{algl1}
	\ \\
	\begin{enumerate}
		{	\item \textbf{Input}: four covariance matrices ($4$ groups of $n\times n$)
			
			\item Linear embedding layer with dimension $1$ to $d=256$ ($4$ groups of $n\times n \times d$)

                \item ReLU activation function ($4$ groups of $n\times n \times d$)

                \item Linear embedding layer with dimension $d$ to $d$ ($4$ groups of $n\times n \times d$)

                \item Segment embedding layer ($4$ groups of $n\times n \times d$)
		
			\item 2-D positional encoding layer ($4$ groups of $n\times n \times d$)

			\item Flattening and concatenation ($(n^2) \times (4d)$)

                \item Transformer encoder part of the RoBERTa-large model, frozen: $24$ layers of transformer encoder with $16$ heads, dimension $4d$, feedforward dimension $16d$, dropout rate $0.1$; 
                
                Trainable LoRA layers with rank $r=8$ and $\text{LoRA-}\alpha=16$, added to each transformer encoder layer ($(n^2) \times (4d)$)

                \item Four different linear embedding layers with dimension $4d$ to $2d$ ($4$ groups of $(n^2) \times (2d)$)

                \item LeakyReLU activation function with $\alpha=0.1$ ($4$ groups of $(n^2) \times (2d)$)

                \item Four different linear embedding layers with dimension $2d$ to $1$ ($4$ groups of $(n^2) \times 1$)

                \item \textbf{Output}: reshaping into four matrices ($4$ groups of $n \times n$)

		}
	\end{enumerate}
\end{algorithm}

For each large pre-trained weight matrix $W$ with size $p\times q$, LoRA freezes $W$ and replace it by $W+\Delta W$. Here $\Delta W=(\text{LoRA-}\alpha/r)AB$, where the trainable $A$ has size $p \times r$, and the trainable $B$ has size $r\times q$. The total number of trainable parameters decreases from $pq$ to $(p+q)r$, since $r\ll p,q$. The scaling factor $\text{LoRA-}\alpha$ controls the update rate.

See Algorithm~\ref{algl2} for the structure of the second half of LEQWENDY. After obtaining the outputs $K_0'$, $K_1'$, $K_2'$, $K_3'$ from $K_0$, $K_1$, $K_2$, $K_3$ by the first half of LEQWENDY, call the QWENDY method to calculate the inferred GRN $A_1$ from $K_0'$, $K_1'$, $K_2'$, $K_3'$. The inputs of the second half of LEQWENDY are four covariance matrices $K_0$, $K_1$, $K_2$, $K_3$, and the inferred GRN $A_1$. The target is the ground truth GRN $A_\Z$. Since the transformer encoder part of the RoBERTa-large model has a fixed dimension 1024, we need to apply $d_1=192$ for each covariance matrix input, and $d_2=256$ for the GRN input, so that the total dimension is $d=4d_1+d_2=1024$.

\begin{algorithm}[!htbp]
	\caption{Structure of LEQWENDY method, second half. The shape of data after each layer is in the brackets.}
	\label{algl2}
	\ \\
	\begin{enumerate}
		{	\item \textbf{Input}: four covariance matrices and one inferred GRN ($5$ groups of $n\times n$)
			
			\item Linear embedding layer with dimension $1$ to $d_1=192$ or $d_2=256$ ($4$ groups of $n\times n \times d_1$ and $1$ group of $n\times n \times d_2$)

                \item Segment embedding layer ($4$ groups of $n\times n \times d_1$ and $1$ group of $n\times n \times d_2$)
		
			\item 2-D positional encoding layer ($4$ groups of $n\times n \times d_1$ and $1$ group of $n\times n \times d_2$)

			\item Flattening and concatenation ($(n^2) \times (d=4d_1+d_2=1024)$)

                \item Transformer encoder part of the RoBERTa-large model, frozen: $24$ layers of transformer encoder with $16$ heads, dimension $4d$, feedforward dimension $16d$, dropout rate $0.1$; 
                
                Trainable LoRA layers with rank $r=16$ and $\text{LoRA-}\alpha=32$, added to each transformer encoder layer ($(n^2) \times d$)

                \item Linear embedding layer with dimension $d$ to $d$ ($(n^2) \times d$)

                \item LeakyReLU activation function with $\alpha=0.1$ ($(n^2) \times d$)

                \item Linear embedding layer with dimension $d$ to $1$ ($(n^2) \times 1$)

                \item \textbf{Output}: reshaping into one matrix ($n \times n$)
			
		}
	\end{enumerate}
\end{algorithm}

\bibliographystyle{unsrt}
\bibliography{qwendy}
\end{document}